\documentclass[11pt]{article}

\input{Style.sty}



\title{
Optimal Circuit Size for Fixed-Hamming-Weight Quantum States Preparation
}
\author{Jingquan Luo\thanks{luojq25@mail2.sysu.edu.cn} }
\author{Lvzhou Li\thanks{Corresponding author: lilvzh@mail.sysu.edu.cn}}
\affil{Institute of Quantum Computing  and Software, School of Computer Science and Engineering, Sun Yat-sen University, Guangzhou 510006, China}
\date{\today }

\begin{document}
\maketitle

\begin{abstract}
We study the problem of efficiently preparing fixed-Hamming-weight (HW-$k$) quantum states, which are superpositions of $n$-qubit computational basis states with exactly $k$ ones. 
We present a quantum circuit construction that prepares any $n$-qubit HW-$k$ state  with a circuit size of $\bo{\binom{n}{k}}$ using at most $\max\{0, n-3\}$ ancillary qubits. 
This is the first construction that  achieves the theoretical lower bound on circuit size while using only a small number of ancillary qubits. 
We believe that the techniques presented in this work can be extended to other quantum state preparation algorithms based on decision diagrams, 
potentially reducing the reliance on ancillary qubits or lowering the overall circuit size.
\end{abstract}

\section{Introduction}

Quantum state preparation (QSP) is a fundamental problem in quantum computing, as the ability to efficiently initialize quantum registers into desired states is a prerequisite for executing a wide range of quantum algorithms, such as Hamiltonian simulation \cite{childs2018toward, low2017optimal, low2019hamiltonian, berry2015simulating}, quantum machine learning \cite{lloyd2014quantum, kerenidis2016quantum, kerenidis2019q, kerenidis2021quantum, rebentrost2014quantum} and quantum linear system solvers~\cite{harrow2009quantum, childs2017quantum}. Since arbitrary state preparation requires a number of operations that scales exponentially with the number of qubits~\cite{shende2005synthesis, bergholm2005quantum, plesch2011quantum, sun2023asymptotically, zhang2022quantum, yuan2023optimal, zhang2024circuit, zhang2024practical, gui2024spacetime, low2024trading, gosset2024quantum}, identifying classes of states that admit efficient preparation methods has become an important direction in both theory and practice.

Over the past years, several structured families of quantum states have been investigated. 
For example, extensive research has been devoted to the preparation of sparse quantum states, in which only a limited number of computational basis states have nonzero amplitudes~\cite{gleinig2021efficient, malvetti2021quantum, ramacciotti2023simple, mozafari2022efficient, de2020circuit, de2022double, mao2024towards, luo2024circuit, farias2025quantum, sun2023asymptotically, zhang2022quantum, zhang2024circuit, luo2025space}. 
Dicke states, which are superpositions of all $n$-qubit basis states with fixed Hamming weight $k$, can be prepared by circuits of size $\mathcal{O}(kn)$~\cite{bartschi2019deterministic, bartschi2022short, aktar2022divide, yuan2025depth}. 
In addition, some works represent quantum states using decision diagrams and subsequently construct quantum circuits from these representations~\cite{mozafari2022efficient, mozafari2020automatic, tanaka2024quantum, hong2025advancing}. 
These results highlight that exploiting the inherent structure of a state can drastically lower the cost of preparation compared to generic decomposition techniques.

In this work, we focus on a particular family of structured quantum states: those whose nonzero amplitudes are restricted to computational basis states with a fixed Hamming weight. Equivalently, every basis state in the support of such a state has exactly $k$ ones in its binary representation. Formally, for a binary string $x \in \{0,1\}^n$, let $\mathrm{HW}(x)$ denote its Hamming weight, i.e., the number of ones in $x$. Then an HW-$k$ state on $n$ qubits is any quantum state of the form
\begin{align}
\ket{\psi} \;=\; \sum_{\substack{x \in \{0,1\}^n \\ \mathrm{HW}(x)=k}} \alpha_x \ket{x},
\end{align}
where $\alpha_x \in \mathbb{C}$ are arbitrary complex amplitudes satisfying the normalization condition $\sum_{x} |\alpha_x|^2 = 1$.
These states occupy a combinatorial subspace of dimension $\binom{n}{k}$ within the $2^n$-dimensional Hilbert space. 
HW-$k$ states can be viewed as a generalization of Dicke states: when all nonzero amplitudes are equal, an HW-$k$ state reduces to the standard $n$-qubit Dicke state with Hamming weight $k$.

For example, for $n=3$ and $k=2$, an HW-2 state has the general form
\begin{align}
\ket{\psi} \;=\; \alpha_{110}\ket{110} + \alpha_{101}\ket{101} + \alpha_{011}\ket{011}.
\end{align}

HW-$k$ states have found a variety of applications across quantum computing. In quantum machine learning, HW-$k$ states are used to mitigate issues such as barren plateaus~\cite{monbroussou2025trainability, kerenidis2022quantum, anselmetti2021local}. Similarly, in quantum finance, encoding strategies that restrict operations to HW-$k$ subspaces have been proposed to efficiently represent portfolio~\cite{he2023alignment}. 
From a physics perspective, HW-$k$ states are referred to $U(1)$-eigenstates, a class of Bethe states that includes exact eigenstates of spin-$1/2$ XXZ chains with open or periodic boundary conditions~\cite{raveh2024deterministic,van2022preparing,ruiz2024bethe,van2021preparing,sopena2022algebraic,li2022bethe}. Preparing these states on a quantum computer allows the computation of correlation functions and the simulation of strongly correlated quantum systems~\cite{somma2002simulating, wecker2015solving}. Thus, HW-$k$ states are both practically relevant and theoretically intriguing from the perspective of circuit synthesis and complexity analysis.

Several approaches have been proposed for preparing HW-$k$ states in the literature~\cite{mao2024towards, farias2025quantum, raveh2024deterministic, van2022preparing, ruiz2024bethe, van2021preparing, sopena2022algebraic, li2022bethe}. 
Refs.~\cite{farias2025quantum, raveh2024deterministic} independently proposed algorithms achieving a circuit size of $\bo{\binom{n}{k}k}$ without ancillary qubits, 
whereas Ref.~\cite{mao2024towards} achieved a circuit size of $\bo{\binom{n}{k}\log n}$ with $\bo{n}$ ancillary qubits. 
Refs.~\cite{van2022preparing, ruiz2024bethe, van2021preparing, sopena2022algebraic, li2022bethe} focused on preparing Bethe states through the coordinate Bethe ansatz~\cite{bethe1997theory, gaudin1971boundary} or the algebraic Bethe ansatz~\cite{faddeev1984spectrum, korepin1997quantum, sklyanin1988boundary}, and their methods do not generalize to general HW-$k$ states.  
Thus, the current state-of-the-art circuit size for HW-$k$ state preparation is $\bo{\min{\cbra{\binom{n}{k}k, \binom{n}{k}\log n}}}$. Regarding lower bounds, a simple dimensionality argument yields a trivial lower bound of $\Omega(\binom{n}{k})$. 
As noted at the end of Ref.~\cite{mao2024towards}, it remains an open problem whether HW-$k$ state preparation can achieve this bound.


\subsection{Contributions}
In this work, we aim to optimize the circuit size (i.e., the gate-count) of preparing HW-$k$ states. 
Our main result is given as follows and  the related
results is summarized in \cref{tab:summary}.

\begin{theorem}\label{thm:hwk}
    For any integers $n$ and $k$, any $n$-qubit HW-$k$ state can be prepared by a quantum circuit of size $\bo{\binom{n}{k}}$ using $\max\cbra{0, n-3}$ ancillary qubits.
\end{theorem}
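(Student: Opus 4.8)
The plan is to combine a binomial divide-and-conquer over the qubits with a depth-first traversal of the state's support tree, arranged so that every branching decision costs only $\bo{1}$ gates. Writing $T(n,k)$ for the circuit size needed to prepare an arbitrary $n$-qubit HW-$k$ state, I would split $\ket{\psi}$ according to the value of one designated qubit: conditioning that qubit to $0$ leaves an HW-$k$ state on the remaining $n-1$ qubits, while conditioning it to $1$ leaves an HW-$(k-1)$ state. If each conditioned sub-preparation can be carried out with only constant overhead beyond its unconditioned cost, this yields the recurrence $T(n,k)\le T(n-1,k)+T(n-1,k-1)+\bo{1}$, with base cases $T(n,0)=T(n,n)=\bo{1}$ (a single forced basis state) and the few small-$n$ cases discharged directly. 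Since $\binom{n}{k}=\binom{n-1}{k}+\binom{n-1}{k-1}$, the recursion tree has exactly $\binom{n}{k}$ leaves (one per support string) and strictly fewer internal nodes, so the recurrence solves to $T(n,k)=\bo{\binom{n}{k}}$, matching the dimensionality lower bound $\om{\binom{n}{k}}$.

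The device that makes each conditioning cost only $\bo{1}$ is an ancilla ``activation flag.'' I would realize the recursion as a depth-first traversal of the binary tree whose leaves are the support strings of $\ket{\psi}$ and whose internal nodes are their common prefixes, maintaining in an ancilla register a flag qubit for the current depth equal to the conjunction of all branch conditions taken so far. The amplitude-loading rotation at each node then becomes a single-qubit rotation controlled on this one flag; descending a level computes the next flag from the previous one and the relevant qubit with a single (possibly negated-control) Toffoli, and the flag is uncomputed on the way back up, so every tree edge contributes $\bo{1}$ gates. Because flags are computed and uncomputed along a DFS, they form a stack and only as many coexist as the current depth, so the number of ancillas never exceeds the tree depth, i.e.\ $\bo{n}$; leaving the root unconditioned and performing the final one or two branchings with directly controlled gates that need no flag then trims this to $\max\cbra{0,n-3}$.

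The step I expect to be the main obstacle is realizing each conditioned sub-preparation with genuinely $\bo{1}$ overhead while driving the ancilla budget down to the exact value $\max\cbra{0,n-3}$. Naively controlling every gate of a sub-preparation on its branch qubit would, after several recursion levels, accumulate many controls into expensive multi-controlled gates; the activation flag removes this, but one must verify that the flags really can be laid out as a DFS stack with at most $\bo{n}$ live at once, and that the special handling of the root and the last branchings recovers the precise constant $3$. A second point requiring care is that the base cases $k'=0$ and $k'=n'$ (where the remaining qubits are forced to $0$ or to $1$) must be discharged in $\bo{1}$ gates \emph{without} recursing through the forced qubits; otherwise the accumulated $\bo{1}$ terms along these deterministic tails would inflate the total to $\bo{n\binom{n}{k}}$, exactly the overhead the construction is meant to avoid. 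Once conditioning is $\bo{1}$ and forced segments are kept flat, the leaf count $\binom{n}{k}$ controls the whole circuit, giving size $\bo{\binom{n}{k}}$, and the bound $\om{\binom{n}{k}}$ certifies optimality.
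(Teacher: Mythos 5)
Your proposal follows essentially the same route as the paper: your Pascal-identity recursion tree (leaves $=$ support strings, branching on one designated qubit per level) is precisely the paper's Hamming tree, your DFS stack of conjunction ``activation flags'' computed and uncomputed by single Toffolis is exactly the paper's ancillas $a_3,\dots,a_{n-1}$, and your trick of leaving the root unconditioned and controlling the shallowest branching directly on a data qubit is how the paper attains exactly $\max\{0,n-3\}$ ancillas. The one mechanism you flagged as the main obstacle but left unspecified --- discharging forced tails ($k'=0$ or $k'=n'$) in $O(1)$ gates --- is resolved in the paper by initializing the register to $\ket{0^{n-k}1^k}$ and maintaining the invariant that the still-undetermined bits always sit in sorted form $0^{n-i-\ell}1^{\ell}$ (a single Toffoli flipping $q_{n-i-\ell}$ moves the boundary at each left branch), so every forced segment is already in its final configuration and costs zero gates.
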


This result establishes, for the first time, that HW-$k$ states can be prepared with a circuit size matching the theoretical lower bound of $\om{\binom{n}{k}}$, while using only a small number of ancillary qubits. Specifically, we improve upon the results of Refs.~\cite{farias2025quantum, raveh2024deterministic} by a factor of $k$, and upon the result of Ref.~\cite{mao2024towards} by a factor of $\log n$.  In addition, it is noted that our method is completely different from those in Refs. \cite{mao2024towards, raveh2024deterministic, farias2025quantum}, which enriches the toolbox of HW-$k$ state preparation.

It is worth mentioning that the circuit constructed by our algorithm only uses $X$ gates, CNOT gates, Toffoli gates, and single- or controlled-$R_y$ and $R_z$ rotations. 
No additional multi-controlled gates are required.

\begin{table}[htbp]
  \renewcommand{\arraystretch}{1.5}
  \centering
  \begin{threeparttable}
  \caption{Summary of Results on HW-$k$ State Preparation}
  \label{tab:summary}
  \begin{tabular}{cccc} 
    \toprule \toprule
    Reference & Circuit Size  & Ancillas   \\
    \midrule 
    \cite{raveh2024deterministic} & $\bo{\binom{n}{k}k}$  & 0 \\
    \cite{farias2025quantum} & $\bo{\binom{n}{k}k}$  & 0 \\
    \cite{mao2024towards} & $\bo{\binom{n}{k}\log n}$, $\om{\binom{n}{k}}$ & $\bo{n}$ \\
    \textbf{Ours} & $\ta{\binom{n}{k}}$ & $\max\cbra{0, n-3}$\\
    \bottomrule \bottomrule
  \end{tabular}
\end{threeparttable}
\end{table}

\subsection{Proof Techniques}

Our proof proceeds by constructing a \emph{Hamming tree} whose nodes are all associated with bitstrings of length $n$ and Hamming weight $k$, with the leaves collectively enumerating all such bitstrings.
Each node at level $i$ is associated with a string of the form $0^{\,n-i-\ell}1^{\,\ell}b$, 
where $b$ is a bitstring of length $i$ and $\ell$ is an integer such that $\ell + \mathrm{HW}(b) = k$. 
Intuitively, the suffix $b$ is gradually extended until the entire string of Hamming weight $k$ is specified. 
The two children of such a node are given by $0^{\,n-i-\ell-1}1^{\ell}0b$ and $0^{\,n-i-\ell}1^{\ell}b$, 
respectively. 
Although the right child shares the same representation as its parent, it can be viewed as advancing the construction by fixing one more bit in the suffix.

We traverse this Hamming tree in a \emph{preorder} manner, appending gates to the circuit at each internal node to realize the corresponding branching. 
To ensure the correct application of gates, we employ ancillary qubits to track the current position within the tree. 
Specifically, an ancillary qubit $a_i$ indicates whether the suffix $\ket{q_i q_{i+1}\dots q_n}$ of a basis state is in configuration $\ket{b}$. 
We first compute $a_{n-1}$ from $(q_n, q_{n-1})$, then successively compute $a_{n-2}$ from $(a_{n-1}, q_{n-2})$, and so on, each time using Toffoli gates and $X$ gates.
Note that the value of $a_{n-i}$ is updated only when visiting a node at level $i$; consequently, some ancillary qubits are refreshed frequently while others remain unchanged for long stretches of the traversal.
This mechanism ensures that the circuit size scales linearly with the size of the Hamming tree.


\section{Preliminaries}\label{section:pre}

\subsection{String Notations}

We denote a binary string of length $n$ by $x = x_1 x_2 \dots x_n \in \{0,1\}^n$.  
For indices $1 \le i \le j \le n$, we use $x[i,j]$ to represent the substring $x_i x_{i+1} \dots x_j$.  
Concatenation of two strings $x$ and $y$ is denoted by $xy$.  
Given integers $n$ and $k$, for a binary string $s$, we define
\begin{align}
    \mathcal{B}_s^{\,n,k} = \{ x \in \{0,1\}^n : \mathrm{HW}(x) = k \text{ and } x[n-|s|+1, n] = s \},
\end{align}
i.e., the set of all binary strings of length $n$ and Hamming weight $k$ whose last $|s|$ bits form the substring $s$.  
When $n$ and $k$ can be inferred from the context, we simply write $\mathcal{B}_s$.

\subsection{Basic Quantum Gates}

We use standard quantum gates as building blocks in our circuits, including $X$, CNOT, and Toffoli (CCX) gates.  
In addition, we frequently use single-qubit rotation gates \emph{U3 gate} parameterized by three angles $(\theta, \phi, \lambda)$:
\begin{align}
U3(\theta, \phi, \lambda) =
\begin{pmatrix}
\cos(\theta/2) & -e^{i \lambda}\sin(\theta/2) \\
e^{i \phi}\sin(\theta/2) & e^{i(\phi+\lambda)}\cos(\theta/2)
\end{pmatrix}.
\end{align}

It can be implemented as a sequence of rotations around the $Y$ and $Z$ axes:
\begin{align}
U3(\theta, \phi, \lambda) = R_z(\phi)\, R_y(\theta)\, R_z(\lambda),
\end{align}
where
\begin{align}
R_y(\theta) = \begin{pmatrix} \cos(\theta/2) & -\sin(\theta/2) \\ \sin(\theta/2) & \cos(\theta/2) \end{pmatrix}, \quad
R_z(\theta) = \begin{pmatrix} 1 & 0 \\ 0 & e^{i \theta} \end{pmatrix}.
\end{align}

These gates form the basic building blocks for the HW-$k$ state preparation circuits described in this work.

\subsection{Binary Tree Basics}

A \emph{binary tree} is a tree data structure in which each node has at most two children. 
Nodes with no children are called \emph{leaf nodes}, while nodes with at least one child are called \emph{internal nodes}. 
For each internal node, we refer to its two children as the \emph{left child} and \emph{right child}, if they exist.
In a binary tree, where every internal node has exactly two children, 
the number of internal nodes is one less than the number of leaf nodes. 
This property will be useful in analyzing the complexity of our HW-$k$ state preparation algorithm.

\section{Algorithm for Preparing Fixed-Hamming-Weight States}
In this section, we provide a detailed description of our HW-$k$ state preparation algorithm. 
Our approach begins by constructing a binary tree that encodes the process of generating all binary strings with Hamming weight $k$. The construction of Hamming trees is inspired by Ref.~\cite{raveh2024deterministic}.
We then perform a pre-order traversal of this tree, and during the traversal, the corresponding quantum circuit is generated.

In the constructed binary tree, we denote the $j$-th node at level $i$ by $v_{i,j}$. 
For example, $v_{0,0}$ represents the root node. 
Each node is associated with a binary string of Hamming weight $k$. 
The string associated with the node at level $i$ is always of the form $0^{\,n-i-\ell}1^{\,\ell}b$, 
where $b$ is a bitstring of length $i$ and the integer $\ell$ satisfies $\ell + \mathrm{HW}(b) = k$. 
Therefore, we have $\max\cbra{0, k-i} \leq \ell \leq \min\cbra{n-i, k}$. 
In particular, the root node $v_{0,0}$ corresponds to the initial string $0^{\,n-k}1^k$, where $b$ is the empty string.

The Hamming tree is recursively constructed starting from the root according to the following rules. 
The construction process essentially extends the suffix $b$ until a binary string of Hamming weight $k$ is determined. 
Suppose the current node is at level $i$, corresponding to the string $0^{\,n-i-\ell}1^{\,\ell}b$. 
If $\ell = 0$ or $\ell = n-i$, the string of this form and Hamming weight $k$ is unique, 
and therefore no children are added to this node. 
Otherwise, two child nodes are added: 
the left child corresponds to the string $0^{\,n-i-\ell-1}1^{\,\ell}0b$, 
and the right child corresponds to the string $0^{\,n-i-\ell-1}1^{\,\ell-1}1b$. 
The procedure is then applied recursively to the left child and the right child. For a node $v$ in the Hamming tree, we use the following notation:
\begin{itemize}
    \item $\text{String}(v)$ denotes the binary string associated with the node $v$;
    \item $\text{Layer}(v)$ denotes the level of the node in the tree, with the root being at layer $0$;
    \item $\text{LeftChild}(v)$ and $\text{RightChild}(v)$ denote the left and right children of node $v$, respectively.
\end{itemize}
The algorithm for constructing a Hamming tree is shown in \cref{alg:hamming_tree}.

It is straightforward to verify that the leaves of a Hamming tree encode all binary strings of Hamming weight $k$.
A Hamming tree with $n=4$ and $k=2$ is shown in \cref{fig:hammingtree}.

\begin{algorithm}[H]
\caption{Constructing a Hamming Tree for HW-$k$ Strings}
\label{alg:hamming_tree}
\begin{algorithmic}[1]
\REQUIRE Integers $n$ (string length) and $k$ (Hamming weight)
\ENSURE Root node of the Hamming tree
\STATE Create root node with $\text{String(root)} \xleftarrow{} 0^{n-k}1^{k}$ and $\text{Layer(root)} \xleftarrow{} 0$ \\
\STATE \textbf{function} BuildHammingTree(node $v$) \\
\STATE \quad $s \xleftarrow{} \text{String}(v)$, $i \xleftarrow{} \text{Layer}(v)$, $\ell \xleftarrow{} \mathrm{HW}(s[1, n-i])$  \\
\STATE \quad \textbf{if} $\ell = 0 \text{ or } \ell = n-i$ \textbf{then}
\STATE \quad\quad \textbf{return} 
\STATE \quad \textbf{else}
\STATE \quad\quad Create node $v_0$ with $\text{String}(v_0) \xleftarrow{} 0^{n-i-\ell-1}1^{\ell}0s[n-i+1,n]$ and $\text{Layer}(v_0) \xleftarrow{} i+1$  \\
\STATE \quad\quad Create node $v_1$ with $\text{String}(v_1) \xleftarrow{} 0^{n-i-\ell-1}1^{\ell}s[n-i+1,n]$ and $\text{Layer}(v_1) \xleftarrow{} i+1$  \\
\STATE \quad\quad $\text{LeftChild}(v) \xleftarrow{} \text{BuildHammingTree}(v_0)$ \\
\STATE \quad\quad $\text{RightChild}(v) \xleftarrow{} \text{BuildHammingTree}(v_0)$ \\
\STATE \quad\quad \textbf{return} $v$
\STATE \quad \textbf{end if} \\
\STATE \textbf{end function}\\
\STATE \textbf{return} \text{BuildHammingTree}(root)
\end{algorithmic}
\end{algorithm}

\begin{figure}[htbp]
    \centering
    \includegraphics[width=0.75\linewidth]{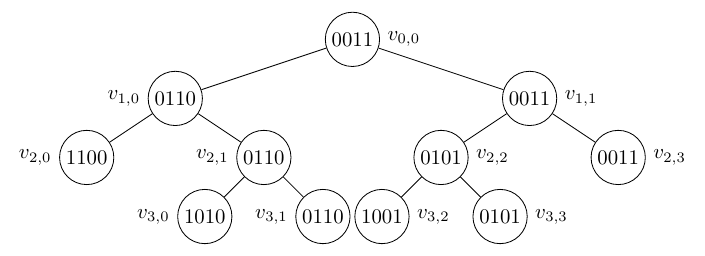}
    \caption{A Hamming tree with $n=4$ and $k=2$.}
    \label{fig:hammingtree}
\end{figure}

Before presenting our algorithm, we first introduce some notation. 
The working register consists of $n$ qubits $q_1, q_2, \dots, q_{n}$, and are used to encode binary strings. 
Specifically, for a binary string $s = s_1 s_2 \dots s_n$ of Hamming weight $k$, 
the bit $s_i$ is encoded in qubit $q_i$.
In addition to the working register, we also require an ancillary register consisting of $n-3$ qubits. 
For convenience, we denote these ancillary qubits by $a_3, a_4, \dots, a_{n-1}$. 
The layout of these qubits is illustrated in \cref{fig:layout} for the special case of $n=6$.

It is worth noting that the arrows in \cref{fig:layout} do not represent physical connections between qubits, 
but rather indicate the direction of information flow.
With a slight abuse of notation, throughout the paper the symbol of a qubit (e.g., $q_i$ or $a_j$)  
is sometimes also used to denote the classical bit value stored in it.
Specifically, $a_5$ stores the value $q_5 \wedge q_6$, which can be computed by applying a Toffoli gate with $q_5$ and $q_6$ as control qubits and $a_5$ as the target qubit.  
Similarly, $a_4$ stores $a_5 \wedge q_4$, $a_3$ stores $a_4 \wedge q_3$, and so on.
All the ancillary qubits are initially prepared in the state $\ket{0}$.  
If we first compute $a_{n-1}$, then $a_{n-2}$, and so on sequentially,  
the qubit $a_i$ ($3 \leq i \leq n-1$) will store the value  
$q_i \wedge q_{i+1} \wedge \cdots \wedge q_n$, which serves as an indicator of whether  
the string stored in $q_i q_{i+1} \cdots q_n$ is equal to $1^i$.  
More generally, to check whether the substring $q_i q_{i+1} \cdots q_n$ equals an arbitrary string $s \in \{0,1\}^i$,  
we may first apply $\text{X}$ gates to those qubits $q_j$ for which $s_j=0$,  
and then compute $a_{n-1}, a_{n-2}, \dots, a_i$ sequentially in the same way.

\begin{figure}[hptb]
    \centering
    \includegraphics[width=0.75\linewidth]{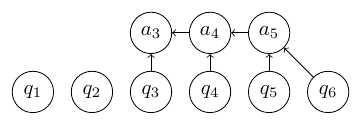}
    \caption{Layout of qubits with $n=6$.}
    \label{fig:layout}
\end{figure}

\begin{theorem}[Restatement of \cref{thm:hwk}]
    For any integers $n$ and $k$, any $n$-qubit HW-$k$ state can be prepared by a quantum circuit of size $\bo{\binom{n}{k}}$ using $\max\cbra{0, n-3}$ ancillary qubits.
\end{theorem}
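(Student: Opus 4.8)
The plan is to build the circuit by a preorder traversal of the Hamming tree of \cref{alg:hamming_tree}, turning each internal node into an $\bo{1}$-size branching gadget and charging its cost against the tree edges. The guiding invariant is that, on the component of the superposition that is \emph{live} at a node $v$ at level $i$ with suffix $b$ and $\ell$ free ones, the working register holds exactly the canonical string $\ket{0^{n-i-\ell}1^{\ell}b}$ (in fact a single basis state carrying all the amplitude destined for $\mathcal{B}_b$), while the ancilla $a_{n-i+1}$ (the suffix indicator $q_{n-i+1}\wedge\cdots\wedge q_n$, computed with the $X$-corrections of \cref{fig:layout}) equals $1$ precisely on that component and $0$ elsewhere. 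Before the traversal I would create the root's canonical string $\ket{0^{n-k}1^{k}}$ with $k$ initial $X$ gates; since $\binom{n}{k}\ge n$ for $1\le k\le n-1$, this $\bo{n}$ overhead is absorbed into $\bo{\binom{n}{k}}$, and the degenerate cases $k\in\{0,n\}$ (the basis states $\ket{0^n},\ket{1^n}$) and $n\le 3$ are handled directly with no ancillas.

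To each node I attach the subtree mass $P_v=\sum_{x\in\mathcal{B}_b}\abs{\alpha_x}^2$; since the two children partition $\mathcal B_b$ into $\mathcal B_{0b}$ and $\mathcal B_{1b}$, we have $P_v=P_{v_0}+P_{v_1}$. The branching gadget at an internal $v$ is a two-qubit Givens-type rotation on $(q_{n-i},q_{n-i-\ell})$, controlled on $a_{n-i+1}$ (directly on $q_n$ when $i=1$, uncontrolled when $i=0$). It sends the live state $\ket{\dots 1_{n-i}\dots 0_{n-i-\ell}\dots}$ to $\sqrt{P_{v_1}/P_v}\,\ket{\text{right}}+\sqrt{P_{v_0}/P_v}\,e^{i\phi_v}\ket{\text{left}}$, where the right branch keeps $q_{n-i}=1$ (committing suffix bit $1$, $\ell\mapsto\ell-1$) and the left branch moves the one from $q_{n-i}$ down to $q_{n-i-\ell}$ (committing suffix bit $0$, $\ell$ unchanged, shifting the free-ones block). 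This block shift is the crux: it preserves the canonical-string invariant, so that a leaf reached with $\ell=n-i$ already \emph{physically} carries its forced prefix of ones and needs no gate of its own. A controlled-$R_z$ supplies $\phi_v$, and a routine induction shows the right-branch phases realize every target $\arg\alpha_x$. Each gadget is $\bo{1}$ gates drawn from $\{X,\mathrm{CNOT},\mathrm{CCX},CR_y,CR_z\}$.

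The suffix indicator is maintained like a stack: descending the edge into a child at level $i+1$ computes $a_{n-i}$ from $(a_{n-i+1},q_{n-i})$ by one (possibly $X$-conjugated) Toffoli, and ascending uncomputes it, with the base $a_{n-1}$ computed directly from $q_{n-1},q_n$. Because preorder DFS crosses each edge exactly twice, the ancilla bookkeeping is $\bo{1}$ per edge. Correctness I would prove by structural induction on subtrees, phrased as a conditional isometry: on the subspace $a_{n-i+1}=1$ marking $\ket{0^{n-i-\ell}1^{\ell}b}$, \textsc{Process}$(v)$ produces $\tfrac{1}{\sqrt{P_v}}\sum_{x\in\mathcal B_b}\alpha_x\ket{x}$ and restores every ancilla to $\ket 0$, acting as identity off that subspace. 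The base case is a leaf, where the canonical string already equals the unique target string and only a phase is needed; the inductive step composes the gadget with the two recursive calls and checks that amplitudes telescope. Counting $2\binom{n}{k}-1$ nodes and $2\binom{n}{k}-2$ edges, each of cost $\bo{1}$, plus the $\bo{n}$ preprocessing, yields size $\bo{\binom{n}{k}}$ and exactly $n-3$ ancillas $a_3,\dots,a_{n-1}$ (internal nodes occur only at levels $0\le i\le n-2$).

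The main obstacle I anticipate is the joint correctness argument across the full superposition: I must verify that each control $a_{n-i+1}$ is $1$ on exactly the intended component (so sibling subtrees are untouched), that the Givens targets $q_{n-i},q_{n-i-\ell}$ sit in $\ket{1},\ket{0}$ on the live branch, and above all that the block shift threads the canonical-string invariant correctly from a parent to both of its children. This is precisely what lets the forced prefix ones be created \emph{once} (at the root) and merely relocated afterward, avoiding the $\ta{\binom{n}{k-1}}$ blow-up that a per-leaf repair of the forced bits would incur and that would already exceed $\bo{\binom{n}{k}}$ when $k$ is close to $n$. The second delicate point is the perfect uncomputation of the ancilla stack, which keeps the output register unentangled from the ancillas and simultaneously certifies the amortized $\bo{1}$-per-edge gate count.
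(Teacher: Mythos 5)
Your proposal follows essentially the same route as the paper's proof: a preorder traversal of the Hamming tree where each internal node at level $i$ contributes an $\bo{1}$ gadget (the paper's controlled-$U3$ on $q_{n-i}$ plus $X$-conjugated Toffoli onto $q_{n-i-\ell}$ is exactly your controlled Givens-type block shift), with the suffix-indicator ancillas $a_3,\dots,a_{n-1}$ computed and uncomputed stack-fashion on descent and ascent, correctness by the same induction on the invariant that $a_{n-i+1}=1$ marks only the live canonical basis state, and the count of $\binom{n}{k}-1$ internal nodes giving size $\bo{\binom{n}{k}}$. The argument is correct and matches the paper's construction in all essentials, including the phase handling (your $CR_z$ phases correspond to the paper's $U3(\theta,\phi,\lambda)$ parameterization).
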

\begin{proof}
The algorithm is shown in \cref{alg:hwk_synthesis}. 
First, we construct a Hamming tree for parameters $n$ and $k$, 
and denote its root node by ${root}$. 
We then initialize an empty quantum circuit $QC$ with $n + m$ qubits, 
and transform the initial state $\ket{0^{n+m}}$ into $\ket{0^{\,n-k} 1^k 0^m}$ 
using $k$ $X$ gates, where $m = \max\cbra{0, n-3}$. 
Here, the first $n$ qubits form the working register, and the last $m$ qubits form the ancillary register. 
Next, we define a recursive function $\texttt{StatePrep}$ and call it on the root node, $\texttt{StatePrep}({root})$. 
During the execution, quantum gates are appended to $QC$. 
After the recursive procedure finishes, the quantum circuit $QC$ preparing the HW-$k$ state is returned.


\paragraph{Description of \texttt{StatePrep}.}  
Suppose the current node is at depth $i$ in the Hamming tree.  
Let $s$ denote the string associated with this node and $\ell = \mathrm{HW}(s[1, n-i])$. Let $p_0,p_1, \theta, \phi, \lambda$ be defined as follows:
\begin{align}\label{equ:p0}
p_0 &= \begin{cases}
    \sqrt{\sum_{x \in \mathcal{B}_{0s[n-i+1, n]}} |\alpha_x|^2},& \text{if } \abs{\mathcal{B}_{0s[n-i+1, n]}} > 1, \\
    \alpha_{s'}, \text{where }s'\text{ is the unique string in }\mathcal{B}_{0s[n-i+1, n]}, &\text{if } \abs{\mathcal{B}_{0s[n-i+1, n]}} = 1.
\end{cases} \\
p_1 &= \begin{cases}
    \sqrt{\sum_{x \in \mathcal{B}_{1s[n-i+1, n]}} |\alpha_x|^2},&  \text{if } \abs{\mathcal{B}_{1s[n-i+1, n]}} > 1, \\
    \alpha_{s''}, \text{where }s''\text{ is the unique string in }\mathcal{B}_{1s[n-i+1, n]}, & \text{if } \abs{\mathcal{B}_{1s[n-i+1, n]}} = 1. \label{equ:p1}
\end{cases}
\end{align}
\begin{align}\label{equ:thetaphilamda}
    \theta = 2\arccos(\frac{\abs{p_1}}{\sqrt{\abs{p_0}^2 + \abs{p_1}^2}}), 
    \quad \phi = \arg(p_0)-\pi, 
    \quad \lambda = \arg(p_1)-\phi.
\end{align}

\begin{itemize}
    \item \textbf{Case $i=0$.}  
    The current node is the root.  
    We apply the circuit shown in \cref{fig:circuita}. 
    Specifically, we first apply a $U3(\theta, \phi, \lambda)$ gate on $q_{n}$. 
    This is followed by a circuit consisting of two X gates and a CNOT gate, which flips $q_{n-k}$ when $q_n$ is in state $\ket{0}$. 
    The action of this circuit is to transform the state $\ket{0^{n-k}1^k}$ into
    \begin{align}
    \label{equ:1}
        \sqrt{\sum_{x \in \mathcal{B}_{0}} \abs{\alpha_x}^2}\ket{0^{\,n-k-1}1^k 0} 
        + \sqrt{\sum_{x \in \mathcal{B}_{1}} \abs{\alpha_x}^2}\ket{0^{\,n-k}1^k}.
    \end{align}

    If the left child is an internal node, we first apply an $X$ gate on $q_n$, 
    recursively call $\texttt{StatePrep}$ on the left child, and finally apply an $X$ gate again to restore $q_n$.  
    Similarly, we recursively call $\texttt{StatePrep}$ on the right child if it is an internal node.
    
    \item \textbf{Case $i>0$.}  
    The current node corresponds to an internal node of the Hamming tree.
    For simplicity, we assume $i>2$; otherwise, one should replace $a_{n-i+1}$ by $q_n$ 
    (as in line~\ref{line2:23} of \cref{alg:hwk_synthesis}). 
    
    We assert that, in the current quantum state, 
    the ancillary qubit $a_{n-i+1}$ is in state $\ket{1}$ 
    if and only if the suffix of the work register matches $s[n-i+1, n]$.  
    Indeed, among all basis states, the only one satisfying this condition is $\ket{s}$. 
    This ensures that all other basis states remain unchanged when transforming $\ket{s}$.
    
    We then apply the circuit shown in \cref{fig:circuitb}. 
    Specifically, we first apply a controlled-$U3(\theta, \phi, \lambda)$ gate with $a_{n-i-1}$ as the control qubit and $q_{n-i}$ as the target qubit.
    This is followed by a circuit consisting of two X gates on $q_{n-i}$ and a Toffoli gate with $a_{n-i-1}$ and $q_{n-i}$ as control qubits 
    and $q_{n-i-\ell}$ as the target qubit, which flips $q_{n-i-\ell}$ when $a_{n-i-1}q_{n-i}$ is in state $\ket{10}$.
    The action of this circuit transforms the basis state 
        $\sqrt{\sum_{x\in \mathcal{B}_{s[n-i+1, n]}}\abs{\alpha_x}^2}\ket{s}$
    into
    \begin{align}
    \label{equ:2}
        \sqrt{\sum_{x\in \mathcal{B}_{0s[n-i+1, n]}}\abs{\alpha_x}^2}\ket{0^{n-i-\ell-1}1^{\ell}0\,s[n-i+1, n]} 
        + \sqrt{\sum_{x\in \mathcal{B}_{1s[n-i+1, n]}}\abs{\alpha_x}^2}\ket{0^{n-i-\ell}1^{\ell}s[n-i+1, n]}.
    \end{align}

    If the left child of the current node is an internal node, 
    we first apply an $X$ gate on $q_{n-i}$, 
    followed by a Toffoli gate controlled by $(a_{n-i+1}, q_{n-i})$ with target $a_{n-i}$.  
    We then recursively call $\texttt{StatePrep}$ on the left child, 
    and finally undo the Toffoli gate and the $X$ gate to restore $q_{n-i}$.  
    For the right child, we first apply the Toffoli gate, 
    then recursively call $\texttt{StatePrep}$, 
    and finally undo the Toffoli gate to restore $q_{n-i}$.

    It is straightforward to verify the aforementioned assertion that, at the beginning of visiting the current node, 
    the ancillary qubit $a_{n-i+1}$ is in state $\ket{1}$ 
    if and only if the suffix of the work register matches $s[n-i+1, n]$ and among all basis states in the current state, the only one satisfying this condition is $\ket{s}$.  
    This can be proved by induction. Assume that the assertion holds for the current node.  
    When visiting the left child, we first apply an $X$ gate and a Toffoli gate, 
    which sets $a_{n-i}$ to $\ket{1}$ only for basis states whose suffix is $0s[n-i+1, n]$, 
    while leaving $a_{n-i}$ in $\ket{0}$ for all other basis states.  
    Among the basis states in the current quantum state, the only one with this suffix is 
    $\ket{0^{\,n-i-\ell-1}1^{\,\ell}0\,s[n-i+1, n]}$,
    which was just produced when visiting the current node.  
    Similarly, the assertion can be verified for the right child, 
    showing that it holds throughout the recursive traversal.
\end{itemize}

According to \cref{equ:1} and \cref{equ:2}, when visiting the current node, 
the basis state corresponding to the current node's string is transformed 
into a superposition of the strings corresponding to its child nodes.  
After traversing all internal nodes, the quantum state becomes a superposition 
of all strings associated with the leaf nodes, i.e., all the string of Hamming weight $k$. Therefore, by appropriately choosing the rotation angles of the $U3$ gates, 
we can finally prepare the target HW-$k$ state.

Finally, we analyze the gate complexity of the algorithm. 
At each internal node, we apply either a $U3$ gate or a controlled-$U3$ gate, two $X$ gates, and a CNOT or Toffoli gate. 
If the left child is also an internal node, two additional $X$ gates and a Toffoli gate are applied; 
if the right child is an internal node, an additional Toffoli gate is applied. 
Since the Hamming tree has $\binom{n}{k}$ leaf nodes, the number of internal nodes is $\binom{n}{k}-1$. 
Therefore, the circuit size and depth of the constructed circuit are both $\bo{\binom{n}{k}}$. The classical runtime for executing the algorithm is also $\bo{\binom{n}{k}}$.
\end{proof}

\begin{figure}[htbp]
    \centering
    \begin{subfigure}{0.45\textwidth}
        \centering
\begin{quantikz}[row sep={0.7cm,between origins},  column sep=0.3cm]
\lstick{$q_n$}       & \gate{U3(\theta, \phi, \lambda)} & \gate{X} & \ctrl{3} & \gate{X} &  \\
\lstick{$q_{n-1}$}   &   & & && \\
\lstick{\vdots}      & &  & &&\\
\lstick{$q_{n-k}$}   &  & &\targ{}& &  \\
\lstick{\vdots}      & & & & &\\
\lstick{$q_1$}       &  & & & &\\
\lstick{$a_3$}       &  & & & &\\
\lstick{\vdots}      &  & & & &\\
\lstick{$a_{n-1}$}   &  & & & &
\end{quantikz}
        \caption{}
        \label{fig:circuita}
    \end{subfigure}
    \hfill
    \begin{subfigure}{0.45\textwidth}
        \centering
\begin{quantikz}[row sep={0.7cm,between origins}, column sep=0.3cm]
\lstick{$q_n$}         &    &  & & &\\
\lstick{\vdots}        &    &     & & &    \\
\lstick{$q_{n-i}$}     &    \gate{U3(\theta, \phi, \lambda)}& \gate{X} & \ctrl{2} & \gate{X} &  \\
\lstick{\vdots}        &    &     & & &    \\
\lstick{$q_{n-i-\ell}$}&     && \targ{} & &  \\
\lstick{$a_{3}$}       &    &    & & &\\
\lstick{\vdots}        &    &     & & &\\
\lstick{$a_{n-i+1}$}   &   \ctrl{-5} && \ctrl{-3}  & &\\
\lstick{\vdots}        &    &     & & &
\end{quantikz}
        \caption{}
        \label{fig:circuitb}
    \end{subfigure}
    \caption{Circuit diagrams for \cref{alg:hwk_synthesis}.}
\end{figure}

\begin{algorithm}[htbp]
\caption{Synthesis of HW-$k$ State}
\label{alg:hwk_synthesis}
\begin{algorithmic}[1]
\REQUIRE Number of qubits $n$, Hamming weight $k$, amplitudes $\cbra{\alpha_x}_{x:\mathrm{HW}(x)=k}$
\ENSURE Quantum circuit preparing the HW-$k$ state
\STATE $m \gets \max\cbra{0, n-3}$ \quad // number of ancillary qubits \label{line2:1}
\STATE root $\xleftarrow{}$ construct a Hamming tree by \cref{alg:hamming_tree} with input $n$ and $k$;
\STATE Initialize a quantum circuit $QC$ with $n+m$ qubits
\FOR{$i = n-k+1$ to $n$}
    \STATE Apply $X$ gate to qubit $q_i$
\ENDFOR \label{line2:6}
\STATE\textbf{function} StatePrep(node $v$)
\STATE \quad \textbf{if} $v$ does not have child nodes \textbf{then} \textbf{return} \textbf{end if}
\STATE \quad $s \xleftarrow{} \text{String}(v)$, $i \xleftarrow{} \text{Layer}(v)$, $\ell \xleftarrow{} \mathrm{HW}(s[1,n-i])$ 
\STATE \quad Let $p_0, p_1, \theta, \phi, \lambda$ be defined as in \cref{equ:p0}, \cref{equ:p1} and \cref{equ:thetaphilamda}
\STATE \quad \textbf{if} i = 0 \textbf{then} \quad // Root node
\STATE \quad\quad Append to $QC$ a $U3(\theta, \phi, \lambda)$ gate on $q_{n}$
\STATE \quad\quad Append to $QC$ an $X$ gate on $q_{n}$, a CNOT gate with $q_{n}$ as control and $q_{n-k}$ as target, followed by another $X$ gate on $q_{n}$
\STATE \quad\quad \textbf{if} LeftChild($v$) is an internal node \textbf{then}
\STATE \quad\quad\quad Append to $QC$ a X gate on $q_{n}$ 
\STATE \quad\quad\quad StatePrep(LeftChild($v$))
\STATE \quad\quad\quad Append to $QC$ a X gate on $q_{n}$ 
\STATE \quad\quad \textbf{end if}
\STATE \quad\quad \textbf{if} RightChild($v$) is an internal node \textbf{then} 
\STATE \quad\quad\quad StatePrep(RightChild($v$))
\STATE \quad\quad \textbf{end if}
\STATE \quad\textbf{else}
\STATE \quad\quad \textbf{if} $i=1$ \textbf{then} $c\xleftarrow{}q_n$ \textbf{else} $c\xleftarrow{} a_{n-i+1}$ \textbf{end if} \label{line2:23}
\STATE \quad\quad Append to $QC$ a controlled-$U3(\theta, \phi, \lambda)$ gate with qubit $c$ as control and $q_{n-i}$ as target
\STATE \quad\quad Append to $QC$ an $X$ gate on $q_{n-i}$, a Toffoli gate with qubit $c$ and $q_{n-i}$ as control and  $q_{n-i-\ell}$ as target, followed by another $X$ gate on $q_{n}$
\STATE \quad\quad \textbf{if} LeftChild($v$) is an internal node is not empty \textbf{then}
\STATE \quad\quad\quad Append to $QC$ a X gate on $q_{n-i}$, followed by a Toffoli gate with qubit $c$ and $q_{n-i}$ as control and  $a_{n-i}$ as target
\STATE \quad\quad\quad StatePrep(LeftChild($v$))
\STATE \quad\quad\quad Append to $QC$ a Toffoli gate with qubit $c$ and $q_{n-i}$ as control and  $a_{n-i}$ as target, followed by a X gate on $q_{n-i}$
\STATE \quad\quad \textbf{end if}
\STATE \quad\quad \textbf{if} RightChild($v$) is an internal node \textbf{then} 
\STATE \quad\quad\quad Append to $QC$ a Toffoli gate with qubit $c$ and $q_{n-i}$ as control and  $a_{n-i}$ as target
\STATE \quad\quad\quad StatePrep(RightChild($v$))
\STATE \quad\quad\quad Append to $QC$ a Toffoli gate with qubit $c$ and $q_{n-i}$ as control and  $a_{n-i}$ as target
\STATE \quad\quad \textbf{end if}
\STATE \quad\textbf{end if}
\STATE \textbf{end function}
\STATE StatePrep(root)
\STATE \textbf{return} $QC$
\end{algorithmic}
\end{algorithm}

In \cref{fig:example_circuits}, we present example quantum circuits generated by our algorithm for small systems. 
\cref{fig:circuit_n3_k1} shows the circuit for $n=3$ and $k=1$, while \cref{fig:circuit_n4_k1} shows the circuit for $n=4$ and $k=1$. 
These examples demonstrate that the circuits generated by our algorithm are highly efficient.
We note that there are consecutive $X$ gates in the circuits. 
These can be removed through further optimization; 
however, we keep them here to help the reader understand the circuits in accordance with \cref{alg:hwk_synthesis}.

\begin{figure}[htbp]
    \centering
    \begin{subfigure}[b]{0.45\textwidth}
        \centering
        \includegraphics[width=\textwidth]{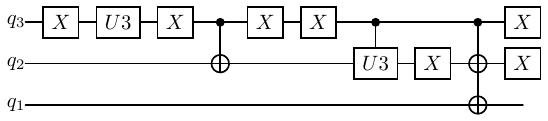} 
        \caption{Circuit for $n=3, k=1$}
        \label{fig:circuit_n3_k1}
    \end{subfigure}
    \hfill
    \begin{subfigure}[b]{0.45\textwidth}
        \centering
        \includegraphics[width=\textwidth]{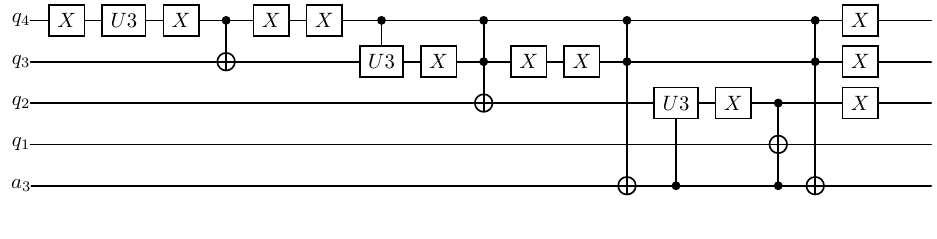} 
        \caption{Circuit for $n=4, k=1$}
        \label{fig:circuit_n4_k1}
    \end{subfigure}
    \caption{Examples of quantum circuits for preparing HW-$k$ states.}
    \label{fig:example_circuits}
\end{figure}

The method of employing ancillary qubits in this work can be extended to other algorithms. 
For instance, in Ref.~\cite{mozafari2022efficient}, an Algebraic Decision Diagram (ADD) is used to represent a quantum state, 
and a state-preparation circuit is constructed by traversing the ADD. 
That approach achieves a circuit size of $O(kn)$, where $k$ is the number of paths in the ADD and $n$ is the number of qubits, 
while requiring only one ancillary qubit. 
In contrast, by employing up to $n$ ancillary qubits to keep track of the current path, 
we can reduce the circuit size to $O(|V|)$, where $|V|$ denotes the number of nodes in the ADD.
Note that $O(|V|)$ is strictly smaller than $O(kn)$, especially when the ADD contains a large number of branching nodes.

\section{Conclusion}\label{section:conclusion}
We presented an algorithm for preparing HW-$k$ states with a circuit size of $\bo{\binom{n}{k}}$, matching the theoretical lower bound, 
while using at most $\max\{0, n-3\}$ ancillary qubits. 
Our construction improves upon previous methods in both gate-count and ancillary qubit number, 
employing only $X$, CNOT, Toffoli, and single- or controlled-$R_y$ and $R_z$ gates. 
The approach may be extended to other decision-diagram-based state preparation algorithms, 
potentially reducing gate complexity and ancillary qubit requirements.


Note that while we were preparing this manuscript, Li \emph{et al.}~\cite{li2025} independently released a preprint proposing an algorithm for preparing HW-$k$ states.  Their construction achieves a circuit with $O(\binom{n}{k})$-size and $O(\log \binom{n}{k})$-depth.  Logarithmic depth is interesting, but comes at the cost of  $O(\binom{n}{k})$ ancillary qubits. When $k \ll n$, the number of ancillary qubits amounts to roughly $n^k/k!$, which grows almost exponentially with $k$. Our construction   focuses on the optimal circuit size without optimizing depth, and only requires at most $n-3$ ancillary qubits, greatly alleviating the demand for ancillary qubits. A problem worth studying in the future is the trade-off between depth and the number of ancillary qubits. In addition, the methods used in Ref.~\cite{li2025} and here are fundamentally different: their method first loads amplitudes using the unary encoding, which already incurs $\binom{n}{k}$ ancillary qubits before converting to the binary encoding, whereas our method relies on traversing a Hamming tree with ancilla-assisted path tracking.


\bibliographystyle{unsrt} 
\bibliography{refs}


\end{document}